\DeclareAcronym{ASE}{
	short = ASE,
	long= amplifier-induced spontaneous emission
}
\DeclareAcronym{AWGN}{
	short = AWGN,
	long= additive white Gaussian noise
}
\DeclareAcronym{BER}{
	short = BER,
	long= bit error rate
}
\DeclareAcronym{BICM}{
	short = BICM,
	long= bit-interleaved coded modulation
}
\DeclareAcronym{BRGC}{
	short = BRGC,
	long = binary reflected Gray code
}
\DeclareAcronym{CCDM}{
	short = CCDM,
	long = constant composition distribution matcher
}
\DeclareAcronym{CD}{
	short = CD,
	long = chromatic dispersion
}
\DeclareAcronym{C3SE}{
	short = C3SE,
	long = Chalmers Centre for Computational Science and Engineering
}
\DeclareAcronym{COIN}{
	short = COIN,
	long = coding for optical communications in the nonlinear regime
}
\DeclareAcronym{DB}{
	short = DBP,
	long = digital backpropagation
}
\DeclareAcronym{DM}{
	short = DM,
	long = distribution matcher
}
\DeclareAcronym{DVBS2}{
	short = DVB-S2,
	long = digital video broadcasting - satellite 2
}
\DeclareAcronym{EDFA}{
	short = EDFA,
	long = erbium-doped fiber amplifier
}
\DeclareAcronym{FEC}{
	short = FEC,
	long = forward error correction
}
\DeclareAcronym{iid}{
	short = i.i.d.,
	long= independently and identically distributed
}
\DeclareAcronym{INFT}{
	short = INFT,
	long= inverse nonlinear Fourier transform
}
\DeclareAcronym{LDPC}{
	short = LDPC,
	long = low-density parity-check
}
\DeclareAcronym{MI}{
	short = MI,
	long = mutual information
}
\DeclareAcronym{NFDM}{
	short = NFDM,
	long= nonlinear frequency-division multiplexing
}
\DeclareAcronym{NFT}{
	short = NFT,
	long= nonlinear Fourier transform,
	alt = Nonlinear Fourier Transform
}
\DeclareAcronym{NLSE}{
	short = NLSE,
	long= nonlinear Schr{\"o}dinger equation
}
\DeclareAcronym{OOK}{
	short = OOK,
	long= on-off keying
}
\DeclareAcronym{PAS}{
	short = PAS,
	long= probabilistic amplitude shaping
}
\DeclareAcronym{PDF}{
	short = PDF,
	long= probability density function
}
\DeclareAcronym{PES}{
	short = PES,
	long= probabilistic eigenvalue shaping
}
\DeclareAcronym{PMF}{
	short = PMF,
	long= probability mass function
}
\DeclareAcronym{PS}{
	short = PS,
	long= probabilistic shaping
}
\DeclareAcronym{rv}{
	short = RV,
	long= random variable
}
\DeclareAcronym{SMF}{
	short = SMF,
	long= single mode fiber
}
\DeclareAcronym{SNR}{
	short = SNR,
	long= signal-to-noise ratio
}
\DeclareAcronym{SSMF}{
	short = SSMF,
	long= standard single mode fiber
}
\DeclareAcronym{SSFM}{
	short = SSF, 
	long= split-step Fourier
}
\definecolor{curve_color01}{HTML}{000000}
\definecolor{curve_color02}{RGB}{239,138,98}
\definecolor{curve_color03}{RGB}{103,169,207}
\definecolor{curve_color04}{RGB}{33,102,172}
\definecolor{curve_color05}{RGB}{178,24,43}
\definecolor{curve_color06}{RGB}{253,219,199}
\definecolor{color_weak_black}{RGB}{20,20,20}
\definecolor{block_gray}{RGB}{204,204,204}
\definecolor{block_background_gray}{RGB}{240,240,240}
\definecolor{block_background_line_gray}{RGB}{175,175,175}
\newcommand{\internalLink}[1]{\hyperref[#1]{\ref*{#1}}}
\newcommand{\internalLinkChapter}[1]{Chapter \hyperref[#1]{\ref*{#1}}}
\newcommand{\internalLinkSection}[1]{Section \hyperref[#1]{\ref*{#1}}}
\newcommand{\internalLinkAppendix}[1]{Appendix \hyperref[#1]{\ref*{#1}}}
\newcommand{\internalTab}[1]{Table\,\hyperref[#1]{\ref*{#1}}}
\newcommand{\internalFig}[1]{Fig.\,\hyperref[#1]{\ref*{#1}}}
\newcommand{\internalEq}[1]{(\hyperref[#1]{\ref*{#1}})}
\newcommand{\internalEqs}[2]{(\hyperref[#1]{\ref*{#1}})-(\hyperref[#2]{\ref*{#2}})}
\newcommand{\internalListing}[1]{Listing \hyperref[#1]{\ref*{#1}}}
\newcommand{\internalAlgorithm}[1]{Algorithm\,\hyperref[#1]{\ref*{#1}}}
\newcommand{\internalLemma}[1]{Lemma \hyperref[#1]{\ref*{#1}}}
\newcommand{\internalDefinition}[1]{Def. \hyperref[#1]{\ref*{#1}}}
\newtheorem{definition}{Definition}
\newtheorem{example}{Example}
\newtheorem{lemma}{Lemma}
\newtheorem{theorem}{Theorem}
\newcommand{\besselI}[2]{\operatorname{I}_{#1}\left(#2\right)}
\renewcommand{\exp}[1]{\operatorname{e}^{#1}}
\newcommand{\sech}[1]{\operatorname{sech}\left(#1\right)}
\newcommand{\RealNumbers}[0]{\mathbb{R}}
\newcommand{\ComplexNumbers}[0]{\mathbb{C}}
\newcommand{\E}[2][]{%
\mathbb{E}_{#1}\!\left\{{#2}\right\}
}
\renewcommand{\Im}[1]{\mathfrak{I}\{#1\}}
\renewcommand{\Re}[1]{\mathfrak{R}\{#1\}}
\newcommand{\MI}[1]{\mathbb{I}(#1)}
\newcommand{\MIstart}[1]{\mathsf{I}(#1)}
\newcommand{\Istar}[1]{\mathsf{I}(#1)}
\newcommand{\Cstar}{\mathsf{C}}
\newcommand{\entropy}[1]{\mathbb{H}(#1)}
\newcommand{\transpose}[1]{#1^\mathsf{T}}
\newcommand{\T}[1]{T\left(#1\right)}
\DeclareSIUnit{\belmilliwatt}{Bm}
\DeclareSIUnit{\dBm}{\deci\belmilliwatt}
\let\MYcaption\@makecaption
\let\@makecaption\MYcaption
\let\originalleft\left
\let\originalright\right
\renewcommand{\left}{\mathopen{}\mathclose\bgroup\originalleft}
\renewcommand{\right}{\aftergroup\egroup\originalright}
\begin{document}

\title{Probabilistic Eigenvalue Shaping for\\ Nonlinear Fourier Transform Transmission}

\author{Andreas~Buchberger, Alexandre~Graell~i~Amat,~\IEEEmembership{Senior~Member,~IEEE,}\\ Vahid~Aref,~\IEEEmembership{Member,~IEEE,} and~Laurent~Schmalen,~\IEEEmembership{Senior~Member,~IEEE}
\thanks{This work was funded by the European Union's Horizon 2020 research and innovation programme under the Marie Sk\l{}odowska-Curie grant agreement No. 676448.}%
\thanks{A. Buchberger is with the Department of Electrical Engineering, Chalmers University of Technology, Gothenburg, SE-412 96,
Sweden and Nokia Bell Labs, Lorenzstr. 10, 70435 Stuttgart, Germany, e-mail: andreas.buchberger@chalmers.se.}
\thanks{A. Graell i Amat is with the Department of Electrical Engineering, Chalmers University of Technology, Gothenburg, SE-412 96,
Sweden, e-mail: alexandre.graell@chalmers.se.}
\thanks{V. Aref and L. Schmalen are with Nokia Bell Labs, Lorenzstr. 10, 70435 Stuttgart, Germany, e-mail: \{firstname.lastname\}@nokia-bell-labs.com.}
}

\maketitle

\begin{abstract}
We consider a \ac{NFT}-based transmission scheme, where data is embedded into the imaginary part of the nonlinear discrete spectrum. Inspired by probabilistic amplitude shaping, we propose a \ac{PES} scheme as a means to increase the data rate of the system. We exploit the fact that for an \ac{NFT}-based transmission scheme, the pulses in the time domain are of unequal duration by transmitting them with a dynamic symbol interval and find a capacity-achieving distribution. The \ac{PES} scheme shapes the information symbols according to the capacity-achieving distribution and transmits them together with the parity symbols at the output of a \acl{LDPC} encoder, suitably modulated, via time-sharing. We furthermore derive an achievable rate for the proposed \ac{PES} scheme. We verify our results with simulations of the discrete-time model as well as with \acl{SSFM} simulations.
\end{abstract}

\begin{IEEEkeywords}
Discrete spectrum, nonlinear Fourier transform (NFT), probabilistic shaping, soliton communication.
\end{IEEEkeywords}

\IEEEpeerreviewmaketitle

\acresetall


\section{Introduction}
\label{sec:introduction}
\IEEEPARstart{P}{ulse} propagation in optical fibers is severely impaired by nonlinear effects that should be either
compensated or utilized for the design of the communication system.
The \ac{NFT}~\cite{yousefi2014nftI-III} provides a method to transform a signal from the time domain into a nonlinear frequency domain (spectrum), where the channel acts as a multiplicative filter on the signal. 
 The nonlinear spectrum consists of a continuous and a discrete part.
 Both parts can be used to transmit information, either separately or jointly, and several schemes have been presented in theory and practice~\cite{yousefi2014nftI-III,dong2015,aref2015ecoc,aref2016ecoc,geisler2016ecoc, hari2016}.
However, very little is known so far about the \ac{PDF} of the received signal in the nonlinear spectral domain when it is contaminated by channel noise. 
   
In~\cite{shevchenko2018tcom}, a simplified communication system  modulating only the imaginary part 
 of the eigenvalues in the discrete nonlinear spectrum was presented. For this scheme, an approximation for the conditional \ac{PDF} of the channel can be obtained in closed form.
In general, for a given channel, the capacity-achieving distribution is not known and is often different from the conventional distribution with equispaced signal points and uniform signaling. Hence, some form of shaping is required~\cite{forney1984}.
Two popular methods of shaping are probabilistic shaping and geometric shaping.
In geometric shaping, the capacity-achieving distribution is mimicked by optimizing the position of the constellation points for equiprobable signaling~\cite{sun1993_geometric_shaping} whereas probabilistic shaping uses uniformly spaced constellation points and approximates the capacity-achieving distribution by assigning different probabilities to different constellation points~\cite{forney1984}.

The main drawback of probabilistic shaping is its practical implementation. An abundance of probabilistic shaping schemes have been presented, most suffering from high decoding complexity, low flexibility in adapting the spectral efficiency, or error propagation. For a literature review on probabilistic shaping, we refer the reader to~\cite[Section II]{bocherer2015_bw_efficient_rate_matched_ldpc}.

Recently, a new scheme called \ac{PAS} has been proposed  in~\cite{bocherer2015_bw_efficient_rate_matched_ldpc}. Compared to other shaping schemes, \ac{PAS} yields high flexibility and close-to-capacity performance over a wide range of spectral efficiencies for the \ac{AWGN} channel while still allowing bit-metric decoding. Although originally introduced for the \ac{AWGN} channel, \ac{PAS} can be applied to other channels with a symmetric capacity-achieving input distribution assuming a sufficiently high spectral efficiency.

In this paper, we consider a similar \ac{NFT}-based transmission scheme to the one presented in~\cite{shevchenko2018tcom}, where data is embedded into the imaginary part of the nonlinear discrete spectrum. As a means to increase the data rate, we demonstrate that the concept of \ac{PAS} can be adapted to this \ac{NFT}-based transmission system. In particular, we propose a \ac{PES} scheme, enabling similar low complexity and bit-metric decoding as \ac{PAS}.
We take advantage of the dependence of the pulse length on the data  for the \ac{NFT}-based transmission system and transmit each pulse as soon as the previous one has been transmitted rather than with a fixed interval as in~\cite{shevchenko2018tcom}, yielding increased data rate.
Accordingly, we find the capacity-achieving input distribution, maximizing the time-scaled \ac{MI}. For ease of notation, we refer to the maximized \ac{MI} as capacity noting that it is in fact the constrained capacity of a system transmitting first-order solitons.
The \ac{PES} scheme then shapes the information symbols according to the capacity-achieving distribution by a \ac{DM}. The information symbols are also encoded by a \ac{LDPC} encoder and the parity symbols at the output of the encoder are suitably modulated. The resulting sequence of modulated symbols and the sequence at the output of the \ac{DM} are transmitted via time-sharing. We further  derive an achievable rate for such a \ac{PES} scheme. 
We demonstrate via discrete-time Monte-Carlo and \ac{SSFM} simulations, that \ac{PES} performs at around \(\SI{2}{\decibel}\) from capacity using off-the-shelf \ac{LDPC} codes. The proposed \ac{PES} scheme yields a significant improvement of up to twice the data rate compared to an unshaped system as in~\cite{shevchenko2018tcom}.

It is important to note that although first-order solitons do not outperform conventional coherent systems due to their spectrally inefficient pulse shape compared to a Nyquist pulse shape, they have some other advantages. For instance, the first-order soliton transmission does not require \ac{CD} compensation or \ac{DB} as dispersion and nonlinearity are balanced and hence compensated.
This work attempts to approach the limits of current NFT-based systems. 
To improve the spectral efficiency further, one should use higher-order solitons as well as the continuous part of the nonlinear spectrum together~\cite{aref2016ecoc}. However, the channel equalization will not be as easy as the one for the first-order solitons and the channel model is not yet fully known.

The remainder of the paper is organized as follows. In \internalLinkSection{sec:system}, we describe pulse propagation in an optical fiber and the \ac{NFT}-based transmission scheme. In \internalLinkSection{sec:optimum_input_distribution}, we optimize the input distribution and in \internalLinkSection{sec:shaping_time_share}, we introduce and describe the proposed \ac{PES} scheme and derive an achievable rate. In \internalLinkSection{sec:performance}, we present numerical results for \ac{PES}, both from Monte-Carlo simulation and \ac{SSFM} simulation, and in \internalLinkSection{sec:conclusion} we draw some conclusions.

Notation: The following notation is used throughout the paper. \(\Re{\cdot}\) and \(\Im{\cdot}\) denote the real and the imaginary part of a complex number, respectively, and \(\jmath=\sqrt{-1}\) denotes  the imaginary unit. Vectors are typeset in bold, e.g., \(\bm{x}\), \acp{rv} are capitalized, e.g., \(X\), and hence vectors of \acp{rv} are capitalized bold, e.g., \(\bm{X}\). The \ac{PDF} of an \ac{rv} \(X\) is written as \(p_X(x)\) and its expectation as \(\E[X]{x}\). The conditional \ac{PDF} of \(Y\) given \(X\) is denoted as \(p_{Y|X}(y|x)\). The \ac{PMF} of an \ac{rv} \(X\) is denoted by \(P_X(x)\). The transpose of a vector or matrix is given as \((\cdot )^\mathsf{T}\). A set is denoted by a capitalized Greek letter, e.g., \(\Lambda\), and its cardinality by \(|\Lambda|\). We write \(\log_a(\cdot)\) for the logarithm of base \(a\) and \(\ln(\cdot)\) for the natural logarithm.


\section{Nonlinear Fourier Transform-based Transmission System}
\label{sec:system}
\subsection{Pulse Propagation and the Nonlinear Fourier Transform}
\label{subsec:NFT}
Pulse propagation in optical fibers is governed by  a partial differential equation, the stochastic \ac{NLSE},
\begin{multline}
		\jmath\frac{\partial u(\tau,\ell)}{\partial \ell} + \jmath\frac{\alpha}{2}u(\tau,\ell) - \frac{\beta_2}{2}\frac{\partial^2 u(\tau,\ell)}{\partial \tau^2} + \gamma u(\tau,\ell) |u(\tau,\ell)|^2  \\= n(\tau,\ell ),\nonumber
\end{multline}	
where \(u(\tau,\ell)\) denotes the envelope of the electrical field as a function of the position \(\ell\) along the fiber and time \(\tau\), \(\alpha\) the attenuation, \(\beta_2\) the second order dispersion, \(\gamma\) the nonlinearity parameter, and \(n(\tau, \ell)\) 
is a white Gaussian process in time and in space with spectral density \(\sigma_0^2\). The spectral density depends on the system and  for distributed Raman amplification is given as \(\sigma_0^2 = \alpha K_T h \nu_0\), where \(K_T\) is the temperature-dependent phonon occupancy factor, and \(h\nu_0\) is the average photon energy~\cite{shevchenko2018tcom}.
A general closed-form solution of the stochastic \ac{NLSE} does not exist. In some special cases, e.g., for noisefree and lossless fibers, special solutions like, e.g., solitons, exist. Furthermore,  we consider the \ac{NLSE} in normalized form in the focusing regime, i.e., \(\beta_2 < 0\), under the assumption of ideal distributed Raman amplification, i.e., \(\alpha = 0\),
\begin{align}
	\jmath\frac{\partial q(t,z)}{\partial z}  + \frac{\partial^2 q(t,z)}{\partial t^2} + 2 q(t,z) |q(t,z)|^2  =  0,
	\label{eqn:nlse_losless_noisefree}
\end{align}
where \(t=\tau / \sqrt{|\beta_2| L/2}\), \(z=\ell / L\), \(q = u\sqrt{\gamma L}/\sqrt{2}\), and \(L\) is the length of the fiber.
In this case, the \ac{NLSE} is an integrable partial differential equation for which a pair of operators, called Lax pair, can be found. The eigenvalues of such an operator remain invariant during noiseless propagation and the Lax pair can be used to solve the partial differential equation. 
Solutions of \internalEq{eqn:nlse_losless_noisefree} can be uniquely represented in terms of its eigenvalues via the so-called \ac{NFT}.
For a given position \(z\), the \ac{NFT} of a signal \(q(t)\) (we drop the position \(z\) for simplicity of presentation) 
with support on the time interval \(t\in [t_1,t_2]\), is calculated by solving the partial differential equation
\begin{align}
\frac{\partial \bm{v}(t,\lambda)}{\partial t} = \begin{pmatrix}
-\jmath\lambda & q(t)\\ 
- q(t)^*&  \jmath\lambda
\end{pmatrix}\bm{v}(t,\lambda),\label{eqn:nft_dgl}
\end{align}
where \(\bm{v}(t,\lambda)=\begin{pmatrix} v_1(t,\lambda) & v_2(t,\lambda) \end{pmatrix}\) is the eigenvector of the auxiliary operator,
with boundary conditions
\begin{align*}
\bm{v}^{(1)}(t,\lambda) &\rightarrow \transpose{\begin{pmatrix} 0 & 1 \end{pmatrix}} \exp{\jmath\lambda t}, &\text{as } t &\rightarrow t_2\\
\bm{v}^{(2)}(t,\lambda) &\rightarrow \transpose{\begin{pmatrix} 1 & 0 \end{pmatrix}} \exp{-\jmath\lambda t}, &\text{as } t &\rightarrow t_1,
\end{align*}
 and \(\lambda\) is the spectral component.
Solving \internalEq{eqn:nft_dgl} gives rise to the continuous and discrete nonlinear spectrum
\begin{align*}
	\hat{q}(\lambda ) &= \frac{b(\lambda)}{a(\lambda)}, \lambda\in\RealNumbers & \tilde{q}(\lambda_i) &= \frac{b(\lambda_i)}{\mathrm{d}a(\lambda) / \mathrm{d}\lambda|_{\lambda=\lambda_i}}, \lambda_i\in\ComplexNumbers^{+},
\end{align*}
respectively, where \(a(\lambda ) = \lim_{t\rightarrow t_2} v_1^{(2)}(t,\lambda)	\exp{\jmath\lambda t}\), \(b(\lambda ) = \lim_{t\rightarrow t_2} v_2^{(2)}(t,\lambda)	\exp{-\jmath\lambda t}\), and \(\lambda_i\) are the zeros of \(a(\lambda)\), \(\lambda_i\in\ComplexNumbers^{+}\), a finite set of isolated complex zeros, referred to as eigenvalues.
Hence, the \ac{NFT} represents the signal in the nonlinear spectral domain, where the influence of the channel on the signal is a multiplicative filter.

As a counterpart to the \ac{NFT} that transforms a signal from the time domain to the nonlinear spectral domain, the \ac{INFT} transforms a signal from the nonlinear spectral domain to the time domain. For an in-depth mathematical description of the \ac{INFT}, we refer the interested reader to~\cite{yousefi2014nftI-III}.

\subsection{Soliton Transmission}
\begin{figure}[!t]
  \centering
		\includegraphics{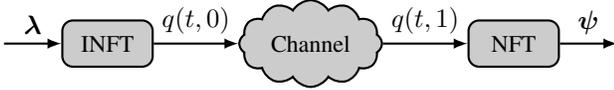}
		\caption{Block diagram of the \ac{NFT}-based system.}
		\label{fig:system:block_diagram_nft}
\end{figure}

As in~\cite{shevchenko2018tcom}, we embed information in the imaginary part of the discrete spectrum, also referred to as eigenvalues. 
Hence, the input of the channel is an \ac{rv} \(X\in\Lambda=\{\lambda_1,\ldots,\lambda_M\}\), where \(\Lambda\) is the set of eigenvalues, \(\lambda_i\) is the \(i\)th eigenvalue,  and \(M\) is the order of the modulation. The eigenvalues \(\{\lambda_i\}\) are assumed to be ordered in ascending order by their imaginary parts. Furthermore, the output of the channel is an \ac{rv} \(Y\in\Psi\), where \(\Psi = \{y\in\mathbb{C}: \Re{y}=0, \Im{y}\geq 0\}\).
A block diagram is depicted in \internalFig{fig:system:block_diagram_nft}. The information embedded in a single eigenvalue \(\lambda\in\Lambda\) is transformed to a time-domain signal \(q(t,0)\) via the \ac{INFT} where the transmitter is located at position \(z=0\) along the fiber. At position \(z=1\), the receiver calculates the discrete spectrum \(\psi\in\Psi\) from the received signal \(q(t,1)\) via the \ac{NFT}.
The time-domain signal corresponds to first order solitons, i.e., 
\begin{align*}
	q(t,0) &= 2\Im{\lambda} \sech{2\Im{\lambda} t}.
\end{align*}
For the \ac{NFT} to be valid, the signal must have finite support, i.e., before transmitting the next pulse, the previous one must have returned to zero. As the pulses in general have infinite tails, we truncate them when they fall below a threshold close to zero. We define the pulse over the smallest support containing a fraction \((1-\delta ) \) of the energy of the pulse and hence, we can formally define the pulse width as follows.

\begin{definition}
The pulse width of \(\lambda\) is defined as the smallest support containing a fraction \((1-\delta ) \) of the energy of the pulse, 
\begin{align*}
	T(\lambda, \delta ) &\triangleq \frac{1}{2\Im{\lambda}} \ln\left(\frac{2}{\delta} -1\right),
\end{align*}
where \(0 < \delta < 1\).
\end{definition}

The value of the cutoff parameter \(\delta\) must be chosen in a way such that soliton-soliton interactions are negligible. For longer transmission distances, \(\delta\) decreases, i.e., the pulses must be spaced further apart. Furthermore, the condition
\begin{align}
	\exp{-2\Im{\lambda} \Delta(\lambda, \delta )} 
	&= \exp{-\ln\left(\frac{2}{\delta} -1\right)} \ll 1
	\label{eqn:guard_time_condition}
\end{align}
must be fulfilled~\cite{shevchenko2018tcom}.

At this point, it is important to comment on the memorylessness of the system emanating from the absence of soliton-soliton interactions. 
A pulse train of well-separated first order solitons was investigated in~\cite{shevchenko2018tcom} for launch powers of \(\SI{-1.5}{\dBm}\) and \(\SI{1.45}{\dBm}\) and transmission over \(\SI{500}{\kilo\meter}\) and \(\SI{2000}{\kilo\meter}\). It was shown via \ac{SSFM} simulations that the correlation between the symbols at the receiver is essentially zero, concluding that the channel is indeed memoryless in the transmission range of \(\SI{500}{\kilo\meter}\) to \(\SI{2000}{\kilo\meter}\) and transmit power range of \(\SI{-1.5}{\dBm}\) to \(\SI{1.45}{\dBm}\) for which the model \internalEq{eqn:pdf_yx_sac} is applicable.
While this approach is not a rigorous proof, the results indicate that memorylessness is a valid assumption. Although the transmission scheme is different in~\cite{shevchenko2018tcom}, the underlying condition that any two pulses need to be sufficiently separated is the same. Hence, we can treat the \ac{NFT}-based transmission system in this work as a memoryless channel.

In a practical system, we assume distributed Raman amplification and \ac{ASE} noise with received power spectral density \(\sigma^2\) to compensate for the lossy fiber and be able to use the \ac{NFT} to relate the input and the output. The conditional \ac{PDF} of such a system has been derived via a perturbative approach and the Fokker-Planck equation method~\cite{derevyanko2005} and is used to design a communication system in~\cite{shevchenko2018tcom}. It is given by

\begin{align}
	p_{Y|X}(\psi|\lambda) &= \frac{2}{\sigma^2}\sqrt{\frac{\Im{\psi}}{\Im{\lambda}}}\exp{-2\frac{\Im{\lambda}+\Im{\psi}}{\sigma^2}}\besselI{1}{\frac{4\sqrt{\Im{\lambda}\Im{\psi}}}{\sigma^2}},
\label{eqn:pdf_yx_sac}
\end{align}
where \(\psi\) is the received symbol as in \internalFig{fig:system:block_diagram_nft}, and \(\besselI{1}{\cdot}\) is the modified Bessel function of the first kind of order one. The power spectral density of the received \ac{ASE} noise \(\sigma^2\) is normalized and relates to real world units as \(\sigma^2 = \gamma\sqrt{L^3}\sigma_0^2 / \left(\sqrt{2|\beta_2|}\right)\).
The \ac{SNR} is defined as \(\mathrm{SNR}\triangleq 4\E[X]{\Im{\lambda}}/\sigma^2\).
It is important to note that the model \internalEq{eqn:pdf_yx_sac} assumes the noise intensity to be small such that it can be treated as a perturbation to the soliton. Hence, the model is only applicable if the signal energy is not the same order as that of the noise. Furthermore, for very high signal powers, \internalEq{eqn:pdf_yx_sac} is no longer valid either since the impact of the inelastic scattering effects (i.e., stimulated Raman or Brillouin scattering) is not considered within the 1st-order perturbation approach. For a detailed derivation of the model, we refer the reader to~\cite{derevyanko2005}.

In~\cite{shevchenko2018tcom}, the shortest possible symbol interval is defined by the pulse duration of \(\lambda_1\), i.e., the longest pulse. However, this tends to be inefficient since especially for short pulses, the guard interval between two consecutive pulses is longer than necessary and thereby limits the data rate.
 Here, we exploit the effect of varying pulse lengths and transmit each pulse as soon as the previous one has returned to zero. This concept is depicted in \internalFig{fig:system:plot_comparison_spacing}, where pulse sequences with fixed and varying symbol interval are compared. The figure clearly shows the advantage of a varying pulse interval and also demonstrates the aforementioned inefficiencies. The data rate of a system with varying symbol intervals depends on the distribution of the data. Thus, we define the average symbol interval as follows.
\begin{figure}[tb]
  \centering
		\includegraphics{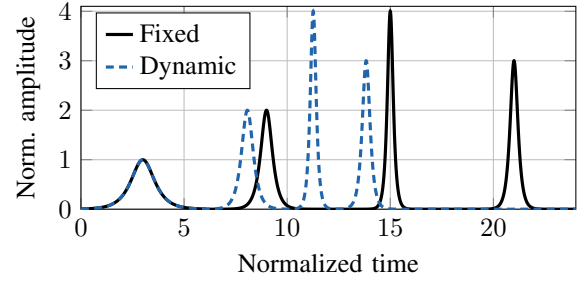}
		\caption{Comparison of a pulse sequence with static symbol intervals and dynamic symbol intervals.}
		\label{fig:system:plot_comparison_spacing}
\end{figure}

\begin{definition}
\label{def:avg_duration}
The average symbol interval is 
\begin{align*}
\bar{T}(X) &\triangleq \sum_{k=1}^{M} p_X (\lambda_k) \T{\lambda_k} = \E[X]{\T{\lambda}}.
\end{align*}
\end{definition}
In~\cite{shevchenko2018tcom}, only eigenvalues with an imaginary part larger than zero are used. We extend this by allowing \(\Im{\lambda}=0\). In the time domain, this results in a pulse with amplitude zero, i.e., we do not transmit anything. We define its corresponding duration as the same as the duration of the shortest pulse, \(\T{\lambda=0} \triangleq \T{\lambda_M}\).

As any practical system can handle only a maximum peak power and a maximum bandwidth, we enforce a peak
power constraint which relates to a maximum eigenvalue constraint. Especially in systems with lumped amplification
and \acp{EDFA}, such a constraint is required as eigenvalues fluctuate depending on their
amplitude, which decreases the performance~\cite{zafrulla2003}.

We note that the varying symbol interval introduces additional challenges on detection. In particular, an erroneously detected symbol may lead to error propagation, insertion errors (detection of symbols when none was transmitted), deletion errors (not detecting a transmitted symbol), or the loss of synchronization. To calculate the capacity, however, we neglect these effects. Hence, the results can be seen as an upper bound on the performance.

\section{Capacity Achieving Distribution}
\label{sec:optimum_input_distribution}
From \internalFig{fig:system:plot_comparison_spacing}, it is intuitive that pulses with short duration should be transmitted more frequently than pulses with long duration. However, shorter pulses are more perturbed by noise than longer pulses. Hence, the optimal input distribution to the channel as described by the conditional \ac{PDF} \internalEq{eqn:pdf_yx_sac} is not the conventional uniform distribution.
The channel capacity is obtained by maximizing the \ac{MI},
\begin{align*}
	\MI{X;Y} &\triangleq \E[X,Y]{\log_2\left({\frac{p_{Y|X}(Y|X)}{\sum_{\tilde{\lambda}\in\Lambda}p_{Y|X}(Y|\tilde{\lambda}) p_X(\tilde{\lambda})}}\right)}
\end{align*}
 over all possible input distributions \(p_X(\lambda )\).
Here, due to the variable transmission duration, we need to consider the \ac{MI} under a variable cost constraint \(\bar{T}(\cdot)\)~\cite{verdu1990}, 
\begin{align}
	\Istar{X;Y} &\triangleq \frac{\MI{X;Y}}{\bar{T}(X)}.
	\label{eqn:mutual_information_time_scaled}
\end{align}
To emphasize that the cost of a symbol is its corresponding pulse duration, we refer to the \ac{MI} in the form of \internalEq{eqn:mutual_information_time_scaled} as time-scaled \ac{MI}. We can therefore define the capacity as
\begin{align}
\Cstar &\triangleq \sup_{p_X(\lambda)} \Istar{X;Y}
\label{eqn:def_cstar}
\end{align}
where we set the supremum to zero if the set of distributions therein is empty.
The capacity-achieving distribution, denoted by \(p_X^*(\lambda )\), is in the set for which the supremum is non-zero.

As the \ac{MI} \(\MI{X;Y}\) is concave in \(p_X(\lambda )\) and \(\bar{T}(X)\) is linear in \(p_X(\lambda )\) and positive, the time-scaled \ac{MI} \(\Istar{X;Y}\) is quasiconcave~\cite[Table 2.5.2]{stancu-minasian_frac_programming}. We can solve \internalEq{eqn:def_cstar} and obtain the corresponding capacity-achieving distribution numerically.
\begin{figure}[!t]
  \centering
		\includegraphics{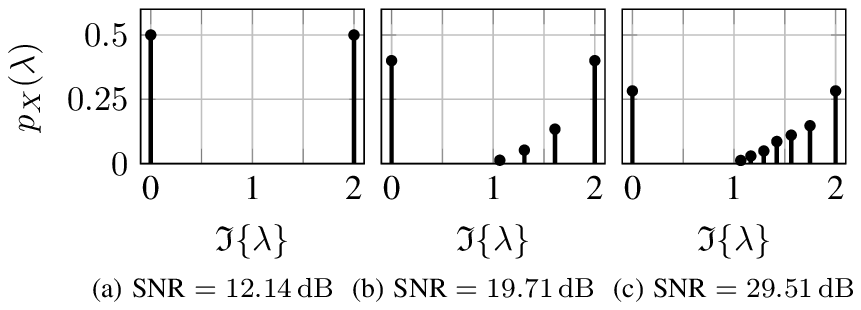}
		\vskip -0.5cm
		\caption{Optimal distribution for different \acp{SNR}.}
		\label{fig:optDist:example_opt_dist}
\end{figure}
Exemplary results of the capacity-achieving distribution are shown in \internalFig{fig:optDist:example_opt_dist}.
We note that the lowest and highest amplitudes are always used with equal and high probability. For low \acp{SNR}, only these are used, i.e., \ac{OOK} is optimal. Furthermore, the capacity-achieving distribution is discrete and is of exponential-like shape with the exception of a point mass at zero as it can be seen in \internalFig{fig:optDist:example_opt_dist}.

Note that \(\Cstar\) assumes memorylessness, which does not necessarily hold due to the variable symbol interval. Hence, \(\Cstar\) is, in fact, the constraint capacity under the assumption of a memoryless channel and the constraint of transmitting only first-order solitons. However, for notational simplicity, we refer to it simply as capacity with its corresponding capacity-achieving distribution.

In the case of a noiseless channel, it is possible to derive a closed form solution to \internalEq{eqn:def_cstar} under the assumption of a finite discretization.
\begin{lemma}
\label{lemma:noiseless}
Let \(\lambda_1, \lambda_2, \ldots, \lambda_M\) be \(M \geq 2\) eigenvalues with \(0 \leq \Im{\lambda_1} < \Im{\lambda_2} < \ldots < \Im{\lambda_M}\) and let \(\T{\lambda_k}\) be the time of transmitting a pulse with eigenvalue \(\lambda_k\). Let $r$ be the unique real positive root of the polynomial $\sum_{k=1}^Mx^{-T(\lambda_k)} - 1$. Then, in the noiseless case, the capacity is obtained as
\[
\Cstar = \log_2(r)
\]
and the capacity-achieving distribution is given by
\begin{align}
 P_X^\diamond(\lambda_k) &= \exp{-\ln(r)\T{\lambda_k}}, & k&=1,\ldots,M.
\label{eqn:opt_dist_noiseless}
\end{align}
\end{lemma}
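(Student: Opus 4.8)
The plan is to recognize this as Shannon's discrete noiseless channel capacity problem, in which the symbols $\lambda_k$ carry unequal durations $\T{\lambda_k}$, and to solve it directly with a single relative-entropy inequality. First I would reduce the objective: the eigenvalues are invariant under noiseless propagation, so in the noiseless limit $Y=X$, the channel is the identity map, $\entropy{X|Y}=0$, and hence $\MI{X;Y}=\entropy{X}$. The time-scaled \ac{MI} in \internalEq{eqn:mutual_information_time_scaled} therefore collapses to the ratio $\Istar{X;Y}=\entropy{X}/\bar{T}(X)$, and \internalEq{eqn:def_cstar} asks for the input \ac{PMF} maximizing this ratio.

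Next I would pin down $r$ and the candidate maximizer. The map $f(x)=\sum_{k=1}^{M}x^{-\T{\lambda_k}}$ is continuous and, because every $\T{\lambda_k}>0$ (the zero eigenvalue inherits $\T{\lambda_M}$ by convention), strictly decreasing on $(0,\infty)$, with $f(x)\to\infty$ as $x\to 0^{+}$ and $f(x)\to 0$ as $x\to\infty$. Hence $f(x)=1$ has a unique positive root $r$, and $r>1$ since $f(1)=M\geq 2$. This same identity makes $q_k\triangleq r^{-\T{\lambda_k}}$ a bona fide \ac{PMF}: it is nonnegative and sums to one precisely because $f(r)=1$.

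The core step is one application of Gibbs' inequality with $q$ as the comparison law. Writing $p_k\triangleq p_X(\lambda_k)$ for an arbitrary input \ac{PMF}, the cross-entropy against $q$ is
\begin{align*}
-\sum_{k=1}^{M}p_k\log_2 q_k = (\log_2 r)\sum_{k=1}^{M}p_k\T{\lambda_k} = (\log_2 r)\,\bar{T}(X).
\end{align*}
Since $\relEntropy{p}{q}\geq 0$, i.e. $\entropy{X}=-\sum_k p_k\log_2 p_k\leq-\sum_k p_k\log_2 q_k$ with equality iff $p=q$, I obtain $\entropy{X}\leq(\log_2 r)\,\bar{T}(X)$. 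Dividing by $\bar{T}(X)>0$ gives $\Istar{X;Y}\leq\log_2 r$ for every input distribution, attained exactly at $p=q$. Thus $\Cstar=\log_2 r$ and the maximizer is $P_X^\diamond(\lambda_k)=r^{-\T{\lambda_k}}=\exp{-\ln(r)\,\T{\lambda_k}}$, matching \internalEq{eqn:opt_dist_noiseless}.

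I expect the only genuine subtlety to be guessing the right comparison distribution $q_k=r^{-\T{\lambda_k}}$; once it is in hand, Gibbs' inequality turns the fractional optimization into a one-line bound and certifies both the value $\log_2 r$ and the uniqueness of the optimizer in a single stroke. The quasiconcavity noted before \internalEq{eqn:def_cstar} would let me reach the same conclusion through a Dinkelbach/Lagrangian parametrization of the ratio—solving $\max_p\,[\entropy{X}-\mu\bar{T}(X)]=0$ for $\mu$ and recovering the Boltzmann-type stationary point $p_k\propto 2^{-\mu\T{\lambda_k}}$—but the relative-entropy route bypasses that machinery and delivers the global optimum directly.
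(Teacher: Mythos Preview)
Your proof is correct and takes a genuinely different route from the paper's. The paper first restricts attention, for each fixed average symbol interval $\bar{T}(X)$, to the maximum-entropy (Boltzmann) family $P_k=\exp{-\theta\T{\lambda_k}}/\xi(\theta)$, expresses the time-scaled \ac{MI} as a function of the single parameter $\theta$, and then differentiates to show that the stationary point is characterized by $\xi(\theta)=1$; solving this yields $\hat\theta=\ln r$ and the stated capacity. In contrast, you bypass the parametrization and the calculus entirely: you guess the comparison law $q_k=r^{-\T{\lambda_k}}$ and apply a single Gibbs inequality to obtain the global upper bound $\entropy{X}\leq(\log_2 r)\,\bar{T}(X)$ over \emph{all} input \acp{PMF}, with equality iff $p=q$. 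Your approach is shorter and immediately certifies both global optimality and uniqueness of the maximizer, without needing the intermediate reduction to the Boltzmann family or the variance term in the derivative. The paper's calculation, on the other hand, is what one would find if one did not already know the answer; it is essentially the Dinkelbach/Lagrangian route you allude to in your final paragraph.
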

\begin{proof}
Suppose that the \(k\)-th eigenvalue is transmitted with probability \(P_k\). For any fixed average symbol interval \(\bar{T}(X) = \sum_kP_kT(\lambda_k)\), where \(\T{\lambda_M} \leq \bar{T}(X) \leq \T{\lambda_1}\), we are interested in the distribution that maximizes the entropy while leading to the average symbol duration \(\bar{T}(X)\).
It is known that this distribution takes the form~\cite[Ch. 12]{cover_inf_theory}
\begin{align}
	P_k &= \frac{\exp{-\theta \T{\lambda_k}}}{\xi(\theta)}
	\label{eqn:maximizer}
\end{align}
where \(\xi(\theta) = \sum_i\exp{-\theta T(\lambda_i)}\) ensures that \(\sum_k P_k = 1\)  and $\theta$ has to be selected such that $\sum_kP_kT(\lambda_k) = \bar{T}$. In the noiseless case, the MI is given by $\MI{X;Y} = \entropy{X}$.
The entropy $\entropy{X}$ then is
\begin{align*}
	\entropy{X} =: \entropy{\theta} &= -\sum_{k=1}^M P_k \log_2\left(\frac{\exp{-\theta \T{\lambda_k}}}{\xi(\theta)}\right)\\
									 &= \frac{1}{{\ln(2)}}\sum_{k=1}^M P_k (\theta \T{\lambda_k} + \ln(\xi(\theta)))\\
									 &= \frac{\theta \bar{T}(X)}{{\ln(2)}} + \log_2(\xi(\theta)).
\end{align*}
The time-scaled MI hence takes the form
\begin{align*}
 \MIstart{X;Y} &=  \frac{\theta}{\ln(2)} + \frac{\log_2(\xi(\theta))}{\bar{T}(X)} \\
 &= \frac{\theta}{\ln(2)} + \frac{\log_2(\xi(\theta))}{\sum_kP_kT(\lambda_k)} \\
 &= \frac{\theta}{\ln(2)} + \frac{\log_2(\sum_k\exp{-\theta T(\lambda_k)})\sum_k\exp{-\theta T(\lambda_k)}}{\sum_k\exp{-\theta T(\lambda_k)}T(\lambda_k)}\,. 
\end{align*}
In order to maximize $\MIstart{X;Y}$, we find the optimal parameter $\theta$ by setting $\xi(\theta) = 1$. This can be seen by setting the derivative of $\MIstart{X;Y}$ to zero, with
\begin{multline*}
\frac{\partial}{\partial\theta} \MIstart{X;Y} = \\
 \log_2\left(\sum_k\exp{-\theta T(\lambda_k)}\right)\left(\frac{\sum_k\exp{-\theta T(\lambda_k)}}{\sum_kT(\lambda_k)\exp{-\theta T(\lambda_k)}}\right)^2\mathop{\mathrm{var}}(T(\lambda))\, ,
\end{multline*}
where \(\mathop{\mathrm{var}}(T(\lambda))\) denotes the variance of \(T(\lambda_k)\) for the given \(\theta\).
By assumption, as all $T(\lambda_k)$ are different, the middle part of this expression is strictly positive and $\mathop{\mathrm{var}}(T(\lambda)) > 0$. Hence, it is easy to see that this derivative can only be zero if $\sum_k\exp{-\theta T(\lambda_k)} = 1$. The optimal $\theta$ is hence found by setting $\xi(\hat{\theta}) = 1$. Consider the polynomial
\[
f(x) = \sum_{k=1}^Mx^{-T(\lambda_k)} - 1.
\]
As this polynomial is monotonically decreasing for positive $x$, with $\lim_{x\to 0^+}f(x) = +\infty$ and $\lim_{x\to+\infty}f(x)=-1$, $f(x)$ has exactly one positive real root. Let $r$ be the unique positive real root of $f(x)$. Then $\hat{\theta} = \ln(r)$. Inserting $\hat{\theta}$ into $\MIstart{X;Y}$ and~\eqref{eqn:maximizer} proves the lemma.
\end{proof}

\begin{figure}[!t]
  \centering
		\includegraphics{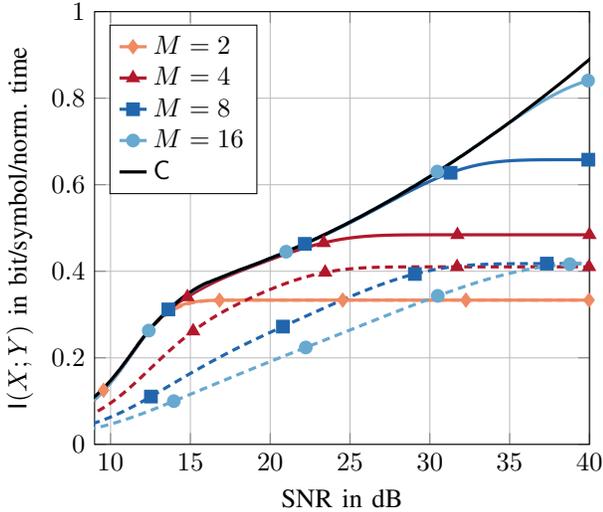}
		\caption{Time-scaled \ac{MI} of the optimal distribution for linearly spaced constellations with \(M\) points (colored with markers solid), and of a system as in~\cite{shevchenko2018tcom} (with markers dotted). As a reference, the capacity \(\Cstar\) is plotted as well (black solid without markers). For the cutoff parameter, \(\delta=0.005\) was used.}
		\label{fig:shaping:mi_comparison}
\end{figure}

We clearly see that \internalEq{eqn:opt_dist_noiseless} is of exponential shape with an additional point mass at zero. Furthermore, we note that the shape of the distribution is mostly caused by the variable pulse duration. The noise then determines the optimal location and optimal number of constellation points. 

For a transmission system, the \ac{MI} is an upper bound on the achievable rate. In \internalFig{fig:shaping:mi_comparison} we evaluate the time-scaled \ac{MI} for various input distributions for a cutoff parameter \(\delta=0.005\). The capacity is depicted with a black solid line.
To reduce the complexity of implementation, we constrain the constellation \(\Lambda\) to \(M\) linearly spaced points from \(\lambda_1=0\) to \(\lambda_M\), i.e.,
\begin{align*}
	\lambda_i &= (i-1)\frac{\lambda_M}{M-1} \text{ for } i=1,\ldots,M,
\end{align*}
and plot the corresponding time-scaled \ac{MI} in colored solid lines with markers. We note that the time-scaled \ac{MI} is very close to the capacity curve until it saturates. Increasing the modulation order \(M\) shows significant increase in the time-scaled \ac{MI}. 
For comparison purposes, we also plot the time-scaled \ac{MI} for a system with fixed symbol duration and conventional uniform distribution on a linearly spaced constellation as in~\cite{shevchenko2018tcom}. We observe that the rate saturates at very low values and that increasing the modulation order \(M\) shows only slight improvement.

\begin{figure}[tb]
  \centering
		\includegraphics{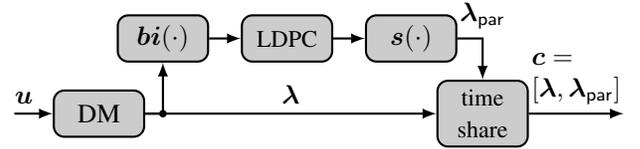} 
		\caption{Block diagram of the \ac{PES} scheme.}
		\label{fig:shaping:system_block_diagram_pes}
\end{figure}
\begin{figure*}[tb]
  \centering
		\includegraphics{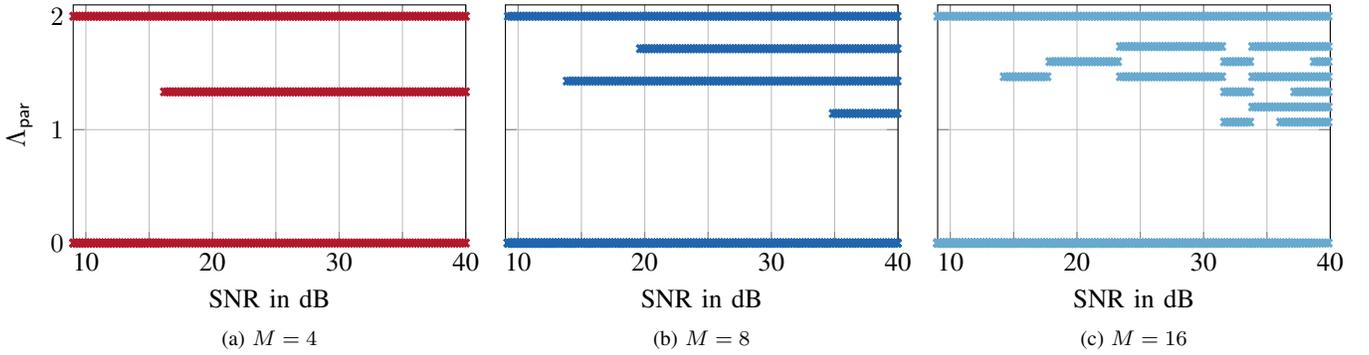}
		\caption{Resulting constellations for the parity symbols for different \acp{SNR}. Note that the highest and the lowest eigenvalue is always occupied for every modulation order.}
		\label{fig:shaping:geometric_parity_constellation}
\end{figure*}
\section{Probabilistic Eigenvalue Shaping}
\label{sec:shaping_time_share}
In the previous section, we observed a significant gap between the time-scaled \ac{MI} of the system in~\cite{shevchenko2018tcom} and the capacity. This gap is referred to as shaping gap. In order to close it, we propose a \ac{PES} system as shown in \internalFig{fig:shaping:system_block_diagram_pes}, inspired by \ac{PAS}~\cite{bocherer2015_bw_efficient_rate_matched_ldpc}.

In the \ac{PAS} scheme, the sequence of uniformly distributed data bits is mapped to a sequence of positive amplitudes distributed half Gaussian by a \ac{DM}. The binary image of this sequence is encoded by a systematic \ac{FEC} code, resulting in uniformly distributed parity bits, which are then used to map the sequence of half Gaussian distributed symbols to a stream of Gaussian distributed symbols. 

As the capacity-achieving distribution \(p_X^*(\lambda)\) is not symmetric, \ac{PAS} cannot be directly applied here.
However, in order to keep the benefits of \ac{PAS}, we wish to apply the \ac{DM} before the \ac{FEC}. We describe \ac{PES} in the following with reference to \internalFig{fig:shaping:system_block_diagram_pes}.
The binary data sequence \(\bm{u}\) of length \(k_\mathsf{s}\) bits is mapped by the \ac{DM} to a sequence of eigenvalues \(\bm{\lambda}\in\Lambda^{n_\mathsf{s}}\) of length \(n_\mathsf{s}\) distributed according to \(p_X^*(\lambda)\). 
The \ac{CCDM} can be used for that purpose~\cite{schulte2016_ccdm}. It is asymptotically optimal as its rate \(R_\mathsf{s}\) approaches the entropy of the desired channel input \(X\),
\begin{align*}
	R_\mathsf{s} &=\frac{k_\mathsf{s}}{n_\mathsf{s}} \rightarrow \entropy{X} \text{  as  } n_\mathsf{s} \rightarrow\infty.
\end{align*}
For large block sizes, the gap between \(R_\mathsf{s}\) and \(\entropy{X}\) is sufficiently small and can be neglected. Note that some of the possible eigenvalues may occur with probability zero.

We consider the modulation order \(M\) to be a power of two such that we can define its binary image. The binary image of \(\bm{\lambda}\), \(\bm{bi}(\bm{\lambda} )\), is then encoded by a systematic encoder with information block length \(k_\mathsf{c}\), code length \(n_\mathsf{c}\), and rate \(R_\mathsf{c}=\frac{k_\mathsf{c}}{n_\mathsf{c}}\). The code is denoted by \(\mathcal{C}\), with \(|\mathcal{C}|=2^{k_\mathsf{c}}\). 
The parity bits at the output of the encoder are mapped to a sequence of eigenvalues \(\bm{\lambda}_\mathsf{par}\in \Lambda_\mathsf{par}\) with modulation order \(M_\mathsf{par}=|\Lambda_\mathsf{par}|\) and \(\Lambda_\mathsf{par}\subseteq\Lambda\) by the block \(\bm{s}(\cdot )\) in \internalFig{fig:shaping:system_block_diagram_pes} such that they are uniformly distributed.

Assuming that a high code rate \(R_\mathsf{c}\) is used, we accept a small penalty with respect to the optimal channel input distribution and transmit  \(\bm{\lambda}\) and \(\bm{\lambda}_\mathsf{par}\) via time-sharing.
The major difference of \ac{PES} compared to \ac{PAS} is the fact that the channel input distribution is not the optimal distribution due to the time-sharing with the sequence \(\bm{\lambda}_\mathsf{par}\). Consequently, this causes a performance degradation. However, \ac{PES} is highly flexible as the spectral efficiency can be adapted by the \ac{DM} and the code rate \(R_\mathsf{c}\), and a single code can be used. Note that every eigenvalue is protected by the code as \ac{FEC} is performed after the \ac{DM} and decoding and demapping can be performed independently. Thus, \Ac{PES} shares these advantages with \ac{PAS}.

We wish for a high code rate \(R_\mathsf{c}\) to keep the performance degradation due to the time-sharing low. More precisely, we wish to maximize the number of symbols distributed according to \(p_X^*(\lambda)\). The ratio between information symbols and coded symbols, denoted by \(R_\mathsf{ts}\), is an indication for the expected performance degradation, 
\begin{align}
R_\mathsf{ts} &= \frac{\frac{n_\mathsf{c} R_\mathsf{c}}{\log_2(M)}}{\frac{n_\mathsf{c}R_\mathsf{c}}{\log_2(M)} + \frac{n_\mathsf{c}(1-R_\mathsf{c})}{\log_2(M_\mathsf{par})}}\nonumber\\
&= \frac{R_\mathsf{c}\log_2(M_\mathsf{par})}{\log_2(M)(1-R_\mathsf{c})+R_\mathsf{c}\log_2(M_\mathsf{par})}.
\label{eqn:shaping:R_ts}
\end{align}

\subsection{Parity symbols}
The parity symbols at the output of the \ac{FEC} code encoder are uniformly distributed. In \internalFig{fig:shaping:mi_comparison}, we observed that \ac{OOK} with uniform signaling, i.e., \(\Lambda_\mathsf{par}=\{\lambda_1, \lambda_M\}\) and \(M_\mathsf{par}=2\),  is optimal for low \ac{SNR} as it achieves capacity and performs reasonably well for high \ac{SNR}. However, we note from \internalFig{fig:shaping:mi_comparison} that for a higher order modulation, even with uniform signaling, higher rates are possible. Hence, here we consider a scenario where \(M_\mathsf{par}>2\).
We further increase the rate by only using a subset of \(\Lambda\) and by picking the eigenvalues such that they are not uniformly spaced.
\begin{example}
Consider the information symbol alphabet \(\Lambda=\{\lambda_1,\ldots,\lambda_8\}\) with \(M=8\). For the \(\Lambda_\mathsf{par}\), we could pick \(\Lambda_\mathsf{par}=\{\lambda_1, \lambda_6, \lambda_7,\lambda_8\}\) with \(p_X(\lambda)=\{0.25, 0.25, 0.25, 0.25\}\) and \(M_\mathsf{par}=4\).
\end{example}
To find the function \(\bm{s}(\cdot )\) that maps the parity symbols onto \(\lambda\in\Lambda_\mathsf{par}\), we use a greedy algorithm as described in \internalAlgorithm{lst:algo_geometric_shaping}. It starts with \ac{OOK}, i.e., \(\Lambda_\mathsf{par}=\{\lambda_1, \lambda_M\}\). For each of the remaining symbols \(\lambda\in\Lambda\setminus\Lambda_\mathsf{par}\), it calculates the time-scaled \ac{MI} of \(\lambda\cup\Lambda_\mathsf{par}\),
finds the symbol \(\lambda\) for which the time-scaled \ac{MI} of \(\lambda\in\Lambda\setminus\Lambda_\mathsf{par}\) is maximized, and adds it to \(\Lambda_\mathsf{par}\). All symbols with a greater or equal imaginary part than \(\lambda\) are removed, i.e., the eigenvalues  \(\{\lambda'\in\Lambda : \Im{\lambda'}\geq\Im{\lambda}\}\) are removed. This process is repeated until there are no symbols left. We then choose the set of symbols that gives the highest time-scaled \ac{MI} as \(\Lambda_\mathsf{par}\).
We note that this procedure does not guarantee an optimal solution. However, for \(M=\{4,8\}\) an exhaustive search gives the same result as that of \internalAlgorithm{lst:algo_geometric_shaping}.

\begin{algorithm}
 \caption{Algorithm to calculate the signal points for the parity symbols. With a slight abuse of notation, we denote the time-scaled \ac{MI} of a set \(\Lambda_\mathsf{par}\) by \(\Istar{\Lambda_\mathsf{par}}\). We assume the symbols in the set to be uniformly distributed.}
 \label{lst:algo_geometric_shaping}
 \begin{algorithmic}[1]
 \renewcommand{\algorithmicrequire}{\textbf{Input:}}
 \renewcommand{\algorithmicensure}{\textbf{Output:}}
 \REQUIRE Constellation \(\Lambda\)
 \ENSURE  Constellation \(\Lambda_\mathsf{par}\)
  \STATE \(\Lambda_\mathrm{placed} = \{\lambda_1, \lambda_M\}\)
	\STATE \(\Lambda_\mathsf{par} = \{\lambda_1, \lambda_M\}\)

	\STATE \(\Lambda_\mathrm{not\,placed} = \Lambda \setminus \Lambda_\mathsf{par}\)
  \WHILE {\(\Lambda_\mathrm{not\,placed} \neq \emptyset\)}
	\FORALL {\(\lambda_i\in\Lambda_\mathrm{not\,placed}\)}
	\STATE Calculate \(\Istar{\Lambda_\mathrm{placed} \cup \lambda_i}\)
	\ENDFOR
	\STATE \(\lambda_\mathrm{max} := \arg\max \Istar{\cdot}\)
	\STATE \(\Lambda_\mathrm{placed} = \Lambda_\mathrm{placed} \cup \lambda_\mathrm{max}\)
	\IF{\(\Istar{\Lambda_\mathrm{placed}} > \Istar{\Lambda_\mathsf{par}}\)}
	\STATE \(\Lambda_\mathsf{par} = \Lambda_\mathrm{placed}\)
	\ENDIF
  \STATE \(\Lambda_\mathrm{not\,placed} = \Lambda_\mathrm{not\,placed}\setminus\{\lambda:\lambda\in\Lambda_\mathrm{not\,placed}, \Im{\lambda}\geq\Im{\lambda_\mathrm{max}}\}\)
  \ENDWHILE
	
 \RETURN \(\Lambda_\mathsf{par}\) 
 \end{algorithmic} 
 \end{algorithm}
In \internalFig{fig:shaping:geometric_parity_constellation}, we show \(\Lambda_\mathsf{par}\) for different modulation orders and \acp{SNR}. For \(M=4\), we note that for low \ac{SNR} \ac{OOK} gives the best result. Increasing the \ac{SNR} results in a third level being added. The same behavior is observed for \(M=8\). Compared to \(M=4\), the third level is introduced at a slightly lower \ac{SNR}. This results from the fact that for \(M=8\), different constellation points are available.
For \(M=16\), we note that again a third level appears when increasing the \ac{SNR}. When further increasing it, this third level moves to an eigenvalue with larger imaginary part and consequently a fourth level at an eigenvalue with lower imaginary part appears. This behavior can be observed repeatedly. To map the binary parity bits to the constellation points, we require \(M_\mathsf{par}\) to be a power of two. As this is not always the case (see \internalFig{fig:shaping:geometric_parity_constellation}), we pick the largest power of two that is smaller or equal than the number of constellation points given by \internalAlgorithm{lst:algo_geometric_shaping}.

\subsection{Achievable Rate of Probabilistic Eigenvalue Shaping}
\label{sec:achievable_rate}
To characterize the performance of \ac{PES}, we derive the achievable rate of \ac{PES}, denoted by \(\mathsf{R}_\mathsf{ps}\).
We assume that the channel is memoryless and that the decoder performs bit-metric decoding.
\begin{theorem}
The achievable rate of \ac{PES} is
\begin{align}
\mathsf{R}_\mathsf{ps} &= R_\mathsf{ts}\left(\entropy{X}  - \sum\limits_{i=1}^{m}\entropy{X^{\mathsf{B}}_i|Y^{\mathsf{B}}_i} \right) \nonumber\\ 
&\quad + \left(1-R_\mathsf{ts}\right)\left(m_\mathsf{par} - \sum\limits_{i=1}^{m_\mathsf{par}}\entropy{X^{\mathsf{B}}_{\mathsf{par},i}|Y^{\mathsf{B}}_{\mathsf{par},i}}\right).\label{eqn:air}
\end{align}
\end{theorem}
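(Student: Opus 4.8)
The plan is to obtain \internalEq{eqn:air} as a time-sharing average of two bit-metric decoding (BMD) achievable rates: one for the shaped information symbols and one for the uniform parity symbols. The single building block I rely on is the standard BMD (generalized mutual information) achievable rate for a memoryless channel with a fixed binary labeling: for an input $X$ with label bits $(X^{\mathsf{B}}_1,\ldots,X^{\mathsf{B}}_m)$, $m=\log_2(M)$, and output $Y$, a capacity-approaching code combined with a product bit-metric decoder achieves the rate $\entropy{X}-\sum_{i=1}^{m}\entropy{X^{\mathsf{B}}_i|Y^{\mathsf{B}}_i}$. Since the decoder is assumed to perform bit-metric decoding and the channel is memoryless, I would invoke this result directly from the probabilistic amplitude shaping framework~\cite{bocherer2015_bw_efficient_rate_matched_ldpc} rather than re-deriving the mismatched-decoding argument.

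First I would identify the two symbol streams multiplexed by time-sharing and read off $R_\mathsf{ts}$ from \internalEq{eqn:shaping:R_ts} as the fraction of transmitted symbols that are information symbols. For the information stream, the symbols are drawn from $\Lambda$ according to $p_X^*(\lambda)$ with entropy $\entropy{X}$, so the BMD building block gives the per-symbol rate $\entropy{X}-\sum_{i=1}^{m}\entropy{X^{\mathsf{B}}_i|Y^{\mathsf{B}}_i}$. For the parity stream, the symbols are uniform on $\Lambda_\mathsf{par}$ with $M_\mathsf{par}=2^{m_\mathsf{par}}$ points; for a uniform input under a bijective labeling each label bit is an independent fair bit, hence $\entropy{X_\mathsf{par}}=m_\mathsf{par}$, and the BMD rate of this stream is $m_\mathsf{par}-\sum_{i=1}^{m_\mathsf{par}}\entropy{X^{\mathsf{B}}_{\mathsf{par},i}|Y^{\mathsf{B}}_{\mathsf{par},i}}$. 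Because the channel is memoryless and the two constellations occupy disjoint channel uses, the generalized mutual information is additive over channel uses, so the overall achievable rate is the $R_\mathsf{ts}$-weighted average of the two per-stream rates, which is exactly \internalEq{eqn:air}.

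The main obstacle is making the time-sharing step rigorous in the presence of the actual PES encoder structure, where a single systematic code $\mathcal{C}$ spans both the shaped information symbols and the uniform parity symbols, so the two streams are not independently coded. The resolving observation is that the GMI bound depends only on the per-use input statistics and the memoryless channel law, not on how the code couples the positions; averaging the per-use GMI over the fixed time-sharing pattern therefore yields a valid achievable rate for the joint code, while the systematic structure of $\mathcal{C}$ is what guarantees uniform parity symbols and hence $\entropy{X_\mathsf{par}}=m_\mathsf{par}$. A secondary point to verify is that the shaping stream truly carries $\entropy{X}$ bits per symbol, which holds in the large-blocklength limit because the \ac{DM} is asymptotically optimal, $R_\mathsf{s}\rightarrow\entropy{X}$; with these two facts in place, substituting the two per-stream BMD rates into the convex combination completes the proof.
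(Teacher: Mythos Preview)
Your proposal is correct and follows essentially the same approach as the paper: invoke the bit-metric decoding achievable rate from the \ac{PAS} framework for each of the two streams and average them according to the time-sharing fraction $R_\mathsf{ts}$. The paper's proof is even terser---it simply cites the \ac{PAS} achievable-rate result (from~\cite{bocherer2017_achievable_rate_ps} rather than~\cite{bocherer2015_bw_efficient_rate_matched_ldpc}) and states that time-sharing yields the average of the two per-scheme rates---so your additional remarks on why $\entropy{X_\mathsf{par}}=m_\mathsf{par}$ and on GMI additivity across channel uses only strengthen the argument without departing from it.
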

\begin{proof}
The achievable rate for \ac{PAS} has been derived in~\cite{bocherer2017_achievable_rate_ps}. For a system employing time-sharing, the resulting achievable rate is the average of the achievable rate of the two transmission schemes.
\end{proof}
\begin{figure}[tb]
  \centering
	\includegraphics{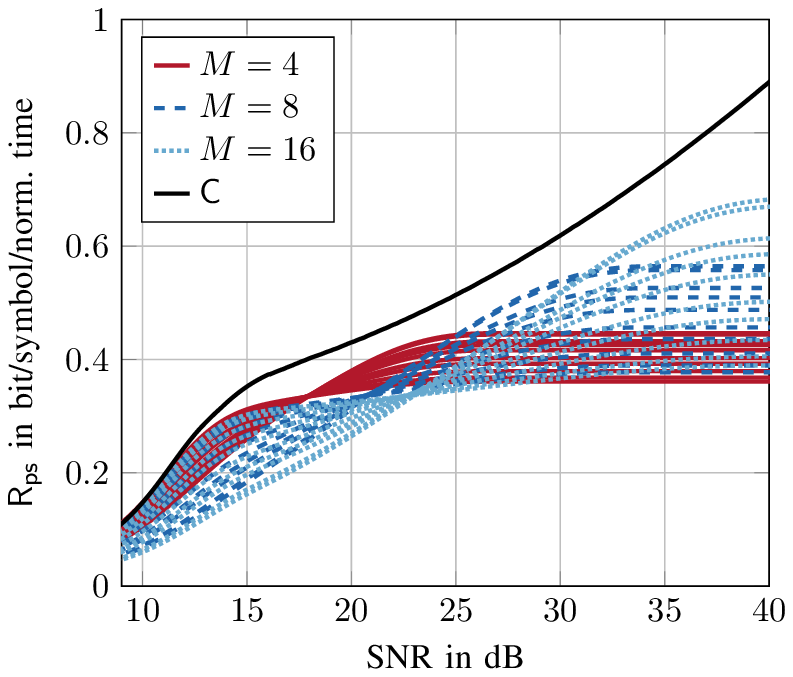}
	\caption{Achievable rates for different code rates with \(\Lambda_\mathsf{par}\) according to \internalAlgorithm{lst:algo_geometric_shaping} for a cutoff parameter \(\delta=0.005\).}
		\label{fig:shaping:achievable_rate_air}
\end{figure}

In \internalFig{fig:shaping:achievable_rate_air}, we plot the capacity and the achievable rate \internalEq{eqn:air} for different code rates \(R_\mathsf{c}=\{\)\(1/4\), \(1/3\), \(2/5\), \(1/2\), \(3/5\), \(2/3\), \(3/4\), \(4/5\), \(5/6\), \(8/9\), \(9/10\}\)
and modulation orders for a cutoff parameter \(\delta=0.005\). \(\Lambda_\mathsf{par}\) and hence \(M_\mathsf{par}\) are chosen according to the results of \internalAlgorithm{lst:algo_geometric_shaping}. For each modulation order, we notice that the curves cross at a certain \ac{SNR}. For \acp{SNR} below this point, the lowest code rate (corresponding to the highest curve) gives the best performance whereas for \acp{SNR} above this point, the highest code rate (corresponding to the highest curve) gives the best performance. We note the influence of time-sharing, which results in a gap between the achievable rate and capacity. The gap increases for lower code rates \(R_\mathsf{c}\) as the channel input distribution deviates more from the optimal one.

\section{Numerical Evaluation}
\label{sec:performance}
In this section, we evaluate the performance of the \ac{PES} scheme via discrete-time Monte-Carlo and \ac{SSFM} simulations.
For the mapping \(\bm{bi}(\cdot)\) (see \internalFig{fig:shaping:system_block_diagram_pes}), we use Gray labeling. Also, for the \ac{FEC}, we use the binary \ac{LDPC} codes of the \acs{DVBS2} standard with code length \(n_\mathsf{c}=64800\) and code rates \(R_\mathsf{c}=\{\)\(1/4\), \(1/3\), \(2/5\), \(1/2\), \(3/5\), \(2/3\), \(3/4\), \(4/5\), \(5/6\), \(8/9\), \(9/10\}\). For the parity symbols, we use the constellation arising from \internalAlgorithm{lst:algo_geometric_shaping}, depicted in \internalFig{fig:shaping:geometric_parity_constellation}. 

\subsection{Detection}
\label{sec:detection}
For the \ac{SSFM} simulation, we simulate a continuous signal and hence, we require a detector. We use the following method to deal with the variable pulse durations: We set a threshold \(\theta\) sufficiently higher than the noise. Once the magnitude of the signal rises above \(\theta\), we save the time as \(t_\mathsf{start}\) and when the magnitude of the signal falls below \(\theta\), we save the time as \(t_\mathsf{end}\). We then extend the interval bounded by \(t_\mathsf{start}\) and \(t_\mathsf{end}\), i.e., \(\tilde{t}_\mathsf{start} = t_\mathsf{start} - \delta_t\) and \(\tilde{t}_\mathsf{end} = t_\mathsf{end} + \delta_t\). Calculating the \ac{NFT} over the interval \([\tilde{t}_\mathsf{start}, \tilde{t}_\mathsf{end}]\) using the spectral method~\cite[Part II, Section IV]{yousefi2014nftI-III} and only considering the imaginary part of the discrete eigenvalue gives the received symbol \(y\). This approach requires that the \ac{SNR} is sufficiently high. As the model has the same requirement due to the perturbation approach, this requirement is fulfilled.

It may happen that due to noise, a received pulse never rises above the threshold \(\theta\). In this case, the shortest duration is assumed (i.e., the duration of the pulse with amplitude zero). This scenario can be avoided by choosing the threshold sufficiently lower than the lowest amplitude. Furthermore, due to the shape of the capacity-achieving distribution, lower amplitudes are less likely, hence preventing this scenario.

To find the best threshold, we tested the performance for different values of \(\theta\) and found that the performance of a threshold at \(75 \%\) of the lowest non-zero amplitude of the constellation works best. We observed that small deviations of the threshold do not affect the performance significantly whereas setting the threshold too high (missing symbols with low amplitude) or to low (detecting a symbol where there is none) leads to performance degradation.
Furthermore, we assume synchronization sequences spread sufficiently far apart in order not to impact the rate. We assume synchronization to be ideal such that it is guaranteed that error propagation is limited.

\ctable[
cap     = Simulation parameters,
caption = Simulation parameters.,
label   = tab:fiber_param,
pos     = tb,
width = \columnwidth,
doinside = \small
]{llr}{
}{                                                          \FL
Span length &  \(l_\mathrm{span}\) & \(\SI{80}{\kilo\meter}\)   \NN
Second order dispersion & \(\beta_2\) & \(\SI{-21.137}{\square\pico\second\per\kilo\meter}\)\NN
Nonlinearity parameter & \(\gamma\) & \(\SI{1.4}{\per\watt\per\kilo\meter}\)\NN
Attenuation & \(\alpha\) & \(\SI{0.2}{\decibel\per\kilo\meter}\)\NN
Shortest pulse & \(T_\mathsf{short}\) & \(\SI{0.83}{\nano\second}\)\NN
Longest pulse & \(T_\mathsf{long}\) & \(\SI{1.3}{\nano\second}\)\NN
Bandwidth & \(B\) & \(\SI{1.2}{\giga\hertz}\)\NN
Avg. transmit power & \(P\) & \(\SI{-4.86}{\dBm}\)\NN
Cutoff parameter & \(\delta\) & \(0.005\)\LL
}

\subsection{Numerical Results}
We perform Monte-Carlo simulations of the discrete-time model \internalEq{eqn:pdf_yx_sac} and show the results in \internalFig{fig:performance:mi_monte_carlo_perf_geometric}, where we plot the transmission rate at a \ac{BER} of \(10^{-5}\) for \(M=4,8\) and \(16\). The highest transmission rate for each modulation order corresponds to the highest code rate \(R_\mathsf{c}\). We notice that the gap to capacity for \(M=4\) is smaller than for \(M=8\) and \(M=16\).
If we consider \(\Delta M = M - M_\mathsf{par}\), i.e., the difference of the modulation order of \(\Lambda\) and \(\Lambda_\mathsf{par}\), we note that for a low \(M\), \(\Delta M\) is low was well. For example, for \(M=4\), \(\Delta M \leq 2\). Hence, the rate loss due to time-sharing is small. For \(M=16\), the gap to capacity is smaller than for \(M=8\). Considering the relevant \ac{SNR} range, we note that \(\Delta M\) is smaller for \(M=16\) than for \(M=8\) and thus explaining the smaller rate loss.

We also simulated the transmission over a fiber using \ac{SSFM} simulations transmitting a train of solitons. We consider a \ac{SMF} with parameters as in \internalTab{tab:fiber_param} and two different amplification schemes, distributed Raman amplification and lumped amplification using \acp{EDFA}.
For both schemes, the peak power constraint is chosen such that the effect of the \acp{EDFA} can be neglected, i.e., \(\lambda_\mathsf{max}=2\jmath\). We employ the detection schemes as described in \internalLink{sec:detection} and choose the cutoff-parameter \(\delta=0.005\), i.e., \(99.5\,\%\) of the energy is contained in the pulse, for which the condition \internalEq{eqn:guard_time_condition} is fulfilled. This then leads to a similar cutoff parameter as in~\cite{shevchenko2018tcom}.
For each modulation order \(M=\{4,8,16\}\), we determine the furthest distance over which we achieve a \ac{BER} of less than \(10^{-5}\) and consider the rate gain compared to an unshaped system as in~\cite{shevchenko2018tcom}.
This results for \(M=\{4,8,16\}\) in  transmission over \(\SI{3200}{\kilo\meter}\), \(\SI{3040}{\kilo\meter}\), and \(\SI{2960}{\kilo\meter}\) at a rate gain of \(\SI{20}{\percent}\), \(\SI{26}{\percent}\), and \(\SI{95}{\percent}\), respectively. The results do not differ for distributed and lumped amplification as this is ensured by the peak power constraint.

\section{Conclusion}
\label{sec:conclusion}
In this paper, we presented a probabilistic shaping scheme for an \ac{NFT}-based transmission system embedding information in the imaginary part of the discrete spectrum. It shapes the information symbols according to the capacity-achieving distribution and transmits them via time-sharing together with the uniformly distributed, suitably modulated parity symbols. 
We exploited the fact that the pulses of the signal in the time domain are of unequal length to improve the data rate compared to~\cite{shevchenko2018tcom}. We used the time-scaled \ac{MI} and derived the capacity-achieving distribution in closed form for the noiseless case and numerically in the general case.
We showed that \acl{PES} significantly improves the performance of an \ac{NFT}-based transmission scheme, and can almost double the data rate. As a possible extension of our work, the continuous spectrum can be used to increase the spectral efficiency~\cite{aref2016ecoc}.

\begin{figure}[tb]
  \centering
		\includegraphics{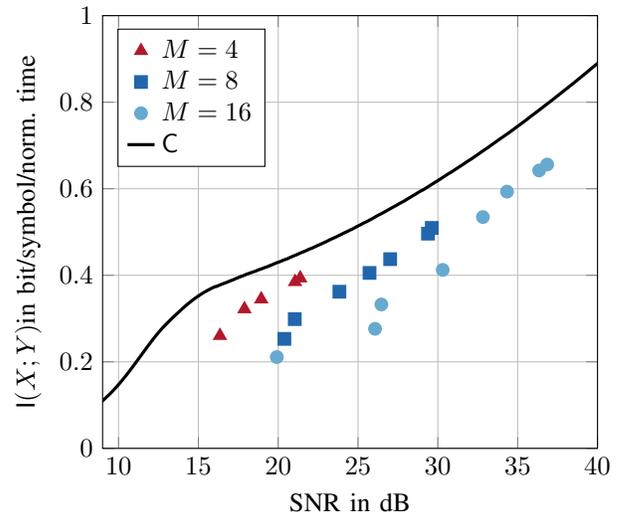}
		\caption{Performance of time sharing with parity symbols according to \internalAlgorithm{lst:algo_geometric_shaping}. The rate points correspond to a performance at \(\text{BER}=10^{-5}\). The highest transmission rate for each modulation order corresponds to the highest code rate \(R_\mathsf{c}\).}
		\label{fig:performance:mi_monte_carlo_perf_geometric}
\end{figure}

\section*{Acknowledgments}
The authors would like to thank the anonymous reviewers for their feedback and comments which helped to improve this paper significantly. Especially, we would like to acknowledge one of the reviewers for proposing an elegant way to prove \internalLemma{lemma:noiseless}, which is included in this paper.

\ifCLASSOPTIONcaptionsoff
  \newpage
\fi


\vfill
\end{document}